\documentclass[11pt,reqno]{amsart}
\usepackage{amssymb}
\usepackage{amsthm}
\usepackage{amsfonts}
\usepackage{amsmath}
\usepackage{stmaryrd}
\usepackage{physics}
\usepackage{braket}
\usepackage{dsfont}
\usepackage{bbold}
\usepackage{graphicx}
\usepackage{relsize}
\usepackage[table,xcdraw]{xcolor}
\usepackage{enumerate}
\usepackage{hyperref}
\hypersetup{
    colorlinks=true,
    linkcolor=blue,
    citecolor=red,
    filecolor=magenta,      
    urlcolor=black,
}
\usepackage[margin=1in,top=0.8in,bottom=0.8in]{geometry}
\usepackage{enumitem}
\usepackage{mathtools}
\usepackage{graphbox}
\usepackage{comment}
\usepackage{float}
\usepackage[font=scriptsize]{caption}

\renewcommand{\epsilon}{\varepsilon}
\renewcommand{\phi}{\varphi}

\newcommand{\M}[1]{\mathbb{M}_{#1}(\mathbb{C})}
\newcommand{\C}[1]{\mathbb{C}^{#1}}
\newcommand{\N}{\mathbb{N}}
\newcommand{\cD}{\mathbb{D}}

\newcommand{\iden}{\mathbb{1}}
\newcommand{\id}{\mathrm{id}}
\newcommand{\U}[1]{\mathbb{U}(#1)}

\newtheorem{theorem}{Theorem}[]

\newtheorem*{definition*}{Definition}

\newtheorem{lemma}[theorem]{Lemma}

\newtheorem*{conjecture*}{Conjecture}

\theoremstyle{definition}

\definecolor{darkgreen}{rgb}{0,0.392,0}

\title{A short proof of the modified Kretschmann--Schlingemann--Werner conjecture}

\begin{document}

\begin{abstract}
Let $\Phi_1, \Phi_2 : \M{d}\to \M{n}$ be two quantum channels with respective Stinespring isometries $V_1, V_2 : \C{d}\to \C{n} \otimes \C{m}$ on \emph{any} common dilation space $\C{m}$. We prove that there exists a unitary $U$ on $\C{m}$ such that 
$\norm{V_1 - (\iden \otimes U) V_2}_{\infty} \leq \sqrt{2\norm{\Phi_1 - \Phi_2}_{\diamond}},$
thus resolving vom Ende's modification of the Kretschmann-Schlingemann-Werner conjecture in the affirmative.

\end{abstract}

\author{Satvik Singh}
\address{\small{\parbox{\linewidth}{Department of Mathematics, Technical University of Munich, Garching, Germany \\ 
Munich Center for Quantum Science and Technology (MCQST), Munich, Germany \vspace{0.1cm}}}}

\email{satvik.singh@tum.de}

\maketitle

\section{Introduction} 

For any completely positive and trace-preserving linear map (also called a \emph{quantum channel}) $\Phi: \M{d}\to \M{n}$, Stinespring's dilation theorem \cite{Stinespring1955} asserts that the action of $\Phi$ can be described by first isometrically embedding the input into a joint output-dilation space followed by tracing out the dilation (also called environment). More precisely, for any quantum channel $\Phi:\M{d}\to \M{n}$, there exists an isometry $V: \C{d}\to \C{n} \otimes \C{m}$ such that
\begin{equation}\label{eq:Stinespring}
\forall X\in \M{d} : \quad \Phi (X) = \Tr_{\C{m}} \bigl( V X  V^* \bigr) ,
\end{equation}
where $\Tr_{\C{m}}$ denotes the partial trace over $\C{m}$. For a fixed dilation space $\C{m}$, the Stinespring representation \eqref{eq:Stinespring} is unique up to unitary transformations on $\C{m}$. 

The continuity theorem of Kretschmann, Schlingemann and Werner~\cite{Kretschmann2008} gives a quantitative equivalence between the distance between two quantum channels and the distance between their corresponding Stinespring representations, and has since found numerous applications in quantum information theory (see \cite{vomEnde2023} and references therein). More precisely, for two quantum channels $\Phi_1, \Phi_2 : \M{d}\to \M{n}$ with respective Stinespring isometries $V_1, V_2 : \C{d}\to \C{n} \otimes \C{m}$ on a common dilation space $\C{m}$, the following holds true \cite{Kretschmann2008}:
\begin{equation}\label{KSW}
 \min_{U\in \U{m}} \norm{V_1 - (\iden \otimes U) V_2}_{\infty}   \overset{(a)}{\leq}  \sqrt{\norm{\Phi_1 - \Phi_2}_{\diamond}} \overset{(b)}{\leq} \sqrt{2\min_{U\in \U{m}} \norm{V_1 - (\iden \otimes U) V_2}_{\infty}},
\end{equation}
where $\norm{\cdot}_{\infty}$ is the operator norm, $\norm{\cdot}_{\diamond}$ is the diamond norm, and the minimum is over the unitary group $\U{m}\subseteq \M{m}$. Crucially, while the upper bound $(b)$ holds for all $m\in \N$, the lower bound $(a)$ was shown to hold when the common dilation space is large enough: $m\geq 2nd$. Recently, the question about how large the dimension of the dilation space should be for $(a)$ to hold was addressed in \cite{vomEnde2023}. It was shown that $(a)$ holds when $m$ is at least as large as the sum of the Kraus ranks of $\Phi_1$ and $\Phi_2$, but it \emph{cannot} hold for all $m\in \N$, and a modified bound with an extra $\sqrt{2}$ factor was conjectured to hold for all $m\in \N$ \cite{vomEnde2023}:
\begin{equation}\label{eq:KSW-lower-modified}
    \min_{U\in \U{m}} \norm{V_1 - (\iden \otimes U) V_2}_{\infty} \leq \sqrt{2\norm{\Phi_1 - \Phi_2}_{\diamond}}.
\end{equation}
The main result of~\cite{vomEnde2023} proves this bound under the assumption that at least one of the two channels has unit Kraus rank. Moreover, it was shown that the constant $\sqrt{2}$ on the right hand side is optimal. In this paper, we prove the conjecture \eqref{eq:KSW-lower-modified} in full generality.

\begin{theorem}\label{theorem:main}
    For two quantum channels $\Phi_1, \Phi_2 :\M{d}\to \M{n}$ with respective Stinespring isometries $V_1, V_2 : \C{d}\to \C{n} \otimes \C{m}$ on \emph{any} common dilation space $\C{m}$, the following holds true:
\begin{equation}
  \min_{U\in \U{m}} \norm{V_1 - (\iden \otimes U) V_2}_{\infty} \leq   \sqrt{2\norm{\Phi_1 - \Phi_2}_{\diamond}}.
\end{equation} 
\end{theorem}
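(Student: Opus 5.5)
The plan is to reduce to the large-dilation case, where the unmodified bound $(a)$ in \eqref{KSW} is already known, and then compress back to $\C{m}$. The compression produces a contraction rather than a unitary. Replacing that contraction by the unitary part of its polar decomposition is what costs the factor $\sqrt 2$.

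Set $\epsilon=\norm{\Phi_1-\Phi_2}_\diamond$ and embed $\C{m}\subseteq\C{2m}$ via an isometry $J$. Put $\tilde V_i=(\iden\otimes J)V_i$. These are Stinespring isometries of $\Phi_i$ on $\C{2m}$. Since each Kraus rank is at most $m$, we have $2m\ge$ the sum of the two Kraus ranks. The result of \cite{vomEnde2023} quoted in the introduction (bound $(a)$ in this regime) then gives a unitary $W\in\U{2m}$ with $\norm{\tilde V_1-(\iden\otimes W)\tilde V_2}_\infty\le\sqrt\epsilon$. Expanding the square for a unit vector $x$ gives
\[
\mathrm{Re}\langle \tilde V_1x,(\iden\otimes W)\tilde V_2x\rangle\ge 1-\epsilon/2 .
\]
Let $C=J^*WJ$, a contraction on $\C{m}$. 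Since $\tilde V_1=(\iden\otimes J)V_1$, this becomes $\mathrm{Re}\langle V_1x,(\iden\otimes C)V_2x\rangle\ge 1-\epsilon/2$ for all unit $x$.

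Write the polar decomposition $C=U|C|$ with $U\in\U{m}$, and put $P=\iden\otimes|C|$, so that $0\le P\le\iden$. For a unit $x$ let $y=(\iden\otimes U^*)V_1x$ and $z=V_2x$; both are unit vectors, and $\mathrm{Re}\langle y,Pz\rangle\ge 1-\epsilon/2$. Since
\[
\norm{V_1x-(\iden\otimes U)V_2x}^2=2-2\,\mathrm{Re}\langle y,z\rangle,
\]
it suffices to show $\mathrm{Re}\langle y,z\rangle\ge1-\epsilon$. This is the step I expect to be the main obstacle. The naive triangle-inequality estimates only give a bound of order $\epsilon^{1/4}$, so the error has to be split carefully.

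Split $\mathrm{Re}\langle y,z\rangle=\mathrm{Re}\langle y,Pz\rangle+\mathrm{Re}\langle y,(\iden-P)z\rangle$, and write $y=Pz+e$. Then $\langle Pz,(\iden-P)z\rangle\ge0$, because $P(\iden-P)\ge0$. Hence $\mathrm{Re}\langle y,(\iden-P)z\rangle\ge-\norm{e}\,\norm{(\iden-P)z}$.

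Now set $t=1-\norm{Pz}^2\in[0,1]$. We have $\norm{e}^2=1+\norm{Pz}^2-2\mathrm{Re}\langle y,Pz\rangle\le \norm{Pz}^2-1+\epsilon=\epsilon-t$. Also, using $(\iden-P)^2\le \iden-P$ and $P^2\le P$,
\[
\norm{(\iden-P)z}^2\le 1-\langle z,Pz\rangle\le 1-\norm{Pz}^2=t .
\]
So the cross term is at least $-\sqrt{t(\epsilon-t)}\ge-\epsilon/2$ by AM–GM. This gives $\mathrm{Re}\langle y,z\rangle\ge 1-\epsilon/2-\epsilon/2=1-\epsilon$. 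Therefore $\norm{V_1x-(\iden\otimes U)V_2x}^2\le 2\epsilon$ uniformly in $x$. Since $U$ does not depend on $x$, taking the supremum over unit $x$ yields $\norm{V_1-(\iden\otimes U)V_2}_\infty\le\sqrt{2\epsilon}$, which is Theorem~\ref{theorem:main}.
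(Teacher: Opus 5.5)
Your proof is correct. Its second half is the paper's Lemma~\ref{lemma:main} carried out vector by vector. In the paper's notation, your $e$ and $(\iden-P)z$ are $(\iden\otimes U^*)Xx$ and $(\iden\otimes U^*)Yx$. Your bound $\norm{(\iden-P)z}^2\le 1-\norm{Pz}^2$ is the inequality $Y^*Y\le Z^*Z$ coming from $(\iden-P)^2\le\iden-P^2$, and your Cauchy--Schwarz/AM--GM step plays the role of Lemma~\ref{lemma:X-Y}. If you run it with $\mathrm{Re}\langle y,Pz\rangle$ in place of $1-\epsilon/2$, it gives exactly $\norm{V_1-(\iden\otimes U)V_2}_\infty^2\le 2\bigl(2-\Gamma(C)\bigr)$ for the polar unitary $U$ of any contraction $C$.

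Where you genuinely differ is in how you produce a contraction with $\Gamma(C)\ge 2-\epsilon$. The paper takes the maximizer $W_\star$ of the fidelity formula \eqref{eq:F-W-2} and bounds $\Gamma(W_\star)=2F(\Phi_1,\Phi_2)$ from below by Fuchs--van de Graaf \eqref{eq:fuchs-channel}. You instead embed into $\C{2m}$, use vom Ende's large-dilation form of bound $(a)$ in \eqref{KSW} as a black box, and compress the resulting unitary to $C=J^*WJ$.

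Your route presents the theorem as a formal consequence of the known large-$m$ bound plus a rounding step that costs exactly $\sqrt2$. It never needs channel fidelity, Uhlmann's theorem or the minimax swap explicitly. The price is reliance on a stronger external result, which (as in \cite{Kretschmann2008}) is itself derived from the same fidelity formula by using extra room in the dilation space to trade the optimal contraction for a unitary. Your lift to $\C{2m}$ followed by compression is therefore a round trip back to the kind of contraction that \eqref{eq:F-W-2} supplies directly. The paper's argument is correspondingly shorter and more self-contained.
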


Our proof builds upon the ideas introduced in \cite{Kretschmann2008, vomEnde2023}, which we recall in Section~\ref{sec:prelim}. 

\section{Preliminaries}\label{sec:prelim}

Let us fix some basic notation first. We denote the algebra of $d_1\times d_2$ complex matrices by $\M{d_1 \times d_2}$, with $\M{d}:= \M{d\times d}$. The unitary group in $\M{d}$ is denoted by $\U{d}$. The identity matrix is denoted by $\iden$. The trace-norm and the operator
norm on matrices are denoted by $\norm{\cdot}_1$ and
$\norm{\cdot}_\infty$, respectively. The minimum eigenvalue of a Hermitian matrix $X$ is denoted by
$\lambda_{\min}(X)$. For Hermitian matrices $X,Y\in \M{d}$, $X\geq Y$ denotes that $X-Y$ is positive
semi-definite. The convex set of quantum states, i.e. positive semi-definite matrices in
$\M{d}$ with unit trace, is denoted by $\cD(\C{d})$. The \emph{fidelity} between two quantum states $\rho, \sigma \in \cD(\C{d})$ is defined as \cite{watrous2018theory} 
\begin{equation}
    f(\rho, \sigma):= \norm{\sqrt{\rho}\sqrt{\sigma}}_1 = \Tr \sqrt{\sqrt{\sigma} \rho \sqrt{\sigma}}.
\end{equation}

A linear map $\Phi:\M{d}\to\M{n}$ is called a quantum channel if it is completely positive
and trace-preserving.
The \emph{diamond} norm of $\Phi$ is defined as \cite{watrous2018theory} 
\begin{equation}
    \norm{\Phi}_{\diamond} := \sup_{\norm{X}_1\leq 1} \norm{(\id \otimes \Phi)(X)}_1 
\end{equation}
where $\id:\M{d}\to \M{d}$ is the identity channel and the supremum is over all $X\in \M{d}\otimes \M{d}$ with $\norm{X}_1\leq 1$. The fidelity between two quantum channels $\Phi_1, \Phi_2:\M{d}\to \M{n}$ is defined as 
\begin{align}
    F(\Phi_1,\Phi_2) &:= \min_{\rho\in \cD(\C{d}\otimes \C{d})  } f\left( (\id \otimes \Phi_1) (\rho), (\id \otimes \Phi_2) (\rho ) \right) \\  
    &= \min_{\ket{\psi}\in \C{d}\otimes \C{d} , \norm{\psi}=1  } f\left( (\id \otimes \Phi_1) (\ketbra{\psi}), (\id \otimes \Phi_2) (\ketbra{\psi} ) \right), \label{eq:channel-fidelity}
\end{align}
where the minimum can be restricted to pure states because of joint concavity of fidelity \cite[Theorem 3.25]{watrous2018theory}. Moreover, the Fuchs-van de Graaf inequalities \cite{Fuchs1999fidelity} show that 
\begin{equation}\label{eq:fuchs-channel}
    1 - F(\Phi_1,\Phi_2) \leq \frac{1}{2}\norm{\Phi_1 - \Phi_2}_{\diamond} \leq \sqrt{1-F^2(\Phi_1,\Phi_2)}.
\end{equation}

Suppose $\Phi_1 , \Phi_2$ admit Stinespring representations \eqref{eq:Stinespring} with isometries $V_1 ,V_2 : \C{d}\to \C{n}\otimes \C{m}$ on a common dilation space $\C{m}$, respectively. Then, the following characterization of channel fidelity $F(\Phi_1, \Phi_2)$ was obtained in \cite[Eq.~(24)]{Kretschmann2008} (see also \cite[Eq.~(3)]{vomEnde2023}):
\begin{align}
    F(\Phi_1,\Phi_2) &= \max_{ \substack{W\in \M{m} \\ \norm{W}_{\infty}\leq 1 }} \min_{\rho\in \cD (\C{d})} \Re \left( \Tr \rho V_1^* (\iden \otimes W) V_2  \right) \label{eq:F-W-1} \\
    &= \frac{1}{2} \max_{ \substack{W\in \M{m} \\ \norm{W}_{\infty}\leq 1 }} \lambda_{\min} \bigl( V_1^* (\iden \otimes W) V_2 + V_2^* (\iden \otimes W^*) V_1 \bigr). \label{eq:F-W-2}
\end{align}

We briefly recall the argument from \cite{Kretschmann2008} to prove \eqref{eq:F-W-1}. Let $\ket{\psi}\in \C{d}\otimes \C{d}$ be a purification of $\rho\in \cD(\C{d})$, so that $\rho=\Tr_{\C{d}}\ketbra{\psi}$. Then, $(\iden\otimes V_i)\ket{\psi}\in \C{d} \otimes \C{n} \otimes \C{m}$ is a purification of $(\id\otimes \Phi_i)(\ketbra{\psi})\in \cD (\C{d} \otimes \C{n})$. By Uhlmann's theorem \cite{Uhlmann1976}, we can write 
\begin{equation}
f\bigl((\id\otimes \Phi_1)(\ketbra{\psi}),(\id\otimes \Phi_2)(\ketbra{\psi})\bigr)
=
\max_{U\in \U{m}}
\left|\Tr \rho\, V_1^*(\iden\otimes U)V_2\right|.
\end{equation}
Since every $\rho\in \cD(\C{d})$ admits such a purification, and since the unit ball of $\M{m}$ is the convex hull of $\U{m}$ and is invariant under multiplication by phases, this shows that \eqref{eq:channel-fidelity} becomes
\begin{equation}
F(\Phi_1,\Phi_2)
=
\min_{\rho\in \cD(\C{d})}
\max_{\substack{W\in \M{m}\\ \norm{W}_\infty\leq 1}}
\Re\left(\Tr \rho\, V_1^*(\iden\otimes W)V_2\right).
\end{equation}
The function being optimized is continuous and affine in both variables, while $\cD(\C{d})$ and the unit ball of $\M{m}$ are compact and convex. Hence, we can swap the $\min$ and $\max$ by using minimax theorem \cite{Sion1958, Simons1995} to obtain \eqref{eq:F-W-1}.

Before proceeding further, we note a simple lemma.

\begin{lemma}\label{lemma:X-Y}
    For any two matrices $X,Y\in \M{d_1 \times d_2}$, $(X-Y)^*(X-Y) \leq 2(X^*X + Y^*Y)$.
\end{lemma}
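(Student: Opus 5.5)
The plan is to expand the difference and check that the remainder is positive semi-definite. Set
\[
D := 2(X^*X + Y^*Y) - (X-Y)^*(X-Y).
\]
Expanding the product gives
\[
(X-Y)^*(X-Y) = X^*X - X^*Y - Y^*X + Y^*Y,
\]
so that
\[
D = X^*X + Y^*Y + X^*Y + Y^*X .
\]

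The key step is to recognize $D$ as a Gram-type expression. Indeed,
\[
X^*X + X^*Y + Y^*X + Y^*Y = (X+Y)^*(X+Y),
\]
and every matrix of the form $Z^*Z$ is positive semi-definite. To see this, take any vector $v\in\C{d_2}$. Then
\[
\langle v, Z^*Z v\rangle = \norm{Zv}^2 \geq 0 .
\]
Applying this with $Z = X+Y$ shows $D\geq 0$, which is exactly the claimed inequality $(X-Y)^*(X-Y)\leq 2(X^*X+Y^*Y)$.

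There is no real obstacle here. The only point needing care is the sign bookkeeping in the cross terms $X^*Y$ and $Y^*X$: they enter $(X-Y)^*(X-Y)$ with a minus sign, so subtracting that product flips them to plus, which is what allows $D$ to be written as $(X+Y)^*(X+Y)$. This is the matrix analogue of the parallelogram law
\[
|a-b|^2 + |a+b|^2 = 2\bigl(|a|^2 + |b|^2\bigr),
\]
and the same argument works verbatim for rectangular matrices, since only the adjoint structure is used.
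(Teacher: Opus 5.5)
Your proof is correct and is essentially the paper's argument. The paper likewise observes that $(X+Y)^*(X+Y) = X^*X + Y^*Y + X^*Y + Y^*X \geq 0$ and uses this to bound the cross terms $-X^*Y - Y^*X$ in the expansion of $(X-Y)^*(X-Y)$ by $X^*X + Y^*Y$; you additionally make the positivity of $Z^*Z$ explicit via $\langle v, Z^*Zv\rangle = \norm{Zv}^2$.
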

\begin{proof}
    Since $(X+Y)^*(X+Y)=X^*X+Y^*Y + X^*Y+Y^*X \geq 0$, we get 
    \begin{align}
        (X-Y)^*(X-Y) &= X^*X+Y^*Y - X^*Y - Y^*X \leq 2(X^*X+Y^*Y).
    \end{align}
\end{proof}

The heart of the proof of Theorem~\ref{theorem:main} is contained in the following lemma, which was already identified in \cite[Section 3]{vomEnde2023} as one possible route towards proving Theorem~\ref{theorem:main}.

\begin{lemma}\label{lemma:main}
    Let $V_1,V_2 : \C{d}\to \C{n}\otimes \C{m}$ be linear isometries. Define the map
    \begin{align}
        \Gamma : \M{m} &\to \mathbb{R} \\
                  W &\mapsto \lambda_{\min}\bigl( V_1^* (\iden \otimes W) V_2 + V_2^* (\iden \otimes W^*) V_1 \bigr).
    \end{align}
    Then, the following holds true\footnote{The continuity of $\Gamma:\M{m}\to \mathbb{R}$ follows easily from the variational characterization of eigenvalues for Hermitian matrices \cite[Chapter 3]{Bhatia1997matrix}. Hence, supremum of $\Gamma$ over any compact subset in $\M{m}$ is attained.}:
    \begin{equation}
        2\max_{ \substack{W\in \M{m} \\ \norm{W}_{\infty}\leq 1 }} \Gamma(W)  \leq 2 + \max_{U\in \U{m}} \Gamma (U).
    \end{equation}
\end{lemma}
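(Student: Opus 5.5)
The plan is to write the contraction $W$ as an average of two unitaries and then use linearity of $W \mapsto H(W) := V_1^*(\iden\otimes W)V_2 + V_2^*(\iden\otimes W^*)V_1$ together with a Weyl-type eigenvalue inequality.

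Fix a maximizer $W$ with $\norm{W}_\infty\leq 1$; it exists by the footnote. The first step is the standard fact that every contraction in $\M{m}$ is the midpoint of two unitaries in $\U{m}$. To see this, take a polar decomposition $W = Q\abs{W}$ with $Q\in\U{m}$ and $0\leq \abs{W}\leq \iden$. Then set
\begin{equation}
U_\pm := Q\bigl(\abs{W} \pm i\sqrt{\iden - \abs{W}^2}\bigr).
\end{equation}
Since $\abs{W}$ and $\sqrt{\iden-\abs{W}^2}$ commute, $\abs{W}\pm i\sqrt{\iden-\abs{W}^2}$ is unitary. Hence $U_\pm\in\U{m}$ and $W = \tfrac12(U_++U_-)$. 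This works on $\C{m}$ itself, so no enlargement of the dilation space is needed, which is exactly the point of the modified conjecture.

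Next, $W\mapsto H(W)$ is real-linear, so $2H(W) = H(U_+) + H(U_-)$. By Weyl's inequality $\lambda_{\min}(A+B)\leq \lambda_{\min}(A)+\lambda_{\max}(B)$ for Hermitian $A,B$, we get
\begin{equation}
2\Gamma(W) = \lambda_{\min}\bigl(H(U_+)+H(U_-)\bigr) \leq \Gamma(U_+) + \lambda_{\max}\bigl(H(U_-)\bigr).
\end{equation}

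Finally, $\lambda_{\max}(H(U_-)) \leq \norm{H(U_-)}_\infty \leq 2\norm{V_1^*(\iden\otimes U_-)V_2}_\infty \leq 2$, because $V_1,V_2$ are isometries and $U_-$ is unitary. Combining, $2\max_{\norm{W}_\infty\leq1}\Gamma(W)\leq 2+\Gamma(U_+)\leq 2+\max_{U\in\U{m}}\Gamma(U)$.

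I expect the only genuinely nontrivial ingredient to be the decomposition of a contraction into two unitaries on the same space. The rest is linearity and eigenvalue bookkeeping. Lemma~\ref{lemma:X-Y} does not seem to be needed here; presumably it enters later, when the lemma is converted into the operator-norm bound of Theorem~\ref{theorem:main}.
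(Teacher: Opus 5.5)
Your proof is correct, and it takes a genuinely different route from the paper.

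The paper also starts from the polar decomposition $W=Q\abs{W}$, but it certifies the bound with the polar unitary $Q$ itself. It writes $V_1-(\iden\otimes Q)V_2=X-Y$ with $X=V_1-(\iden\otimes W)V_2$ and $Y=\bigl(\iden\otimes Q(\iden-\abs{W})\bigr)V_2$. It then combines Lemma~\ref{lemma:X-Y}, the operator inequality $(\iden-\abs{W})^2\leq\iden-\abs{W}^2$ and the relations $V_i^*V_i=\iden$ to get $\norm{V_1-(\iden\otimes Q)V_2}_\infty^2\leq 2(2-\Gamma(W))$. Finally it converts this via $\norm{V_1-(\iden\otimes Q)V_2}_\infty^2=2-\Gamma(Q)$.

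You instead certify with $U_+$ (or, symmetrically, $U_-$), using only the midpoint decomposition and real-linearity of $H$. What your route buys:
\begin{itemize}
\item It is shorter and needs neither Lemma~\ref{lemma:X-Y} nor the norm identity.
\item It uses only $\norm{V_1}_\infty\norm{V_2}_\infty\leq1$ rather than the isometry property.
\item It makes quantitative the convex-hull fact that the paper invokes only qualitatively for \eqref{eq:F-W-1}.
\end{itemize}
You could even skip Weyl's inequality. Since $H(U_+)=2H(W)-H(U_-)\geq 2H(W)-2\iden$ as operators and $\lambda_{\min}$ is monotone, the bound follows directly. The paper's chain implicitly yields the same operator inequality $H(Q)\geq 2H(W)-2\iden$ for the polar unitary. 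In exchange, the paper's version shows that the canonical polar part of the optimal $W_\star$ already works. It is also phrased directly in terms of the Stinespring distance that Theorem~\ref{theorem:main} bounds.

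One correction to your closing remark: the paper uses Lemma~\ref{lemma:X-Y} precisely in the proof of this lemma, not later. The proof of Theorem~\ref{theorem:main} only needs $\norm{V_1-(\iden\otimes U)V_2}_\infty^2=2-\Gamma(U)$ and the Fuchs--van de Graaf inequality. So on your route Lemma~\ref{lemma:X-Y} is not needed anywhere.
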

\begin{proof}
Let us first note that for unitary matrices $U\in \U{m}$, 
\begin{align}
    (V_1 - (\iden \otimes U)V_2 )^* (V_1 - (\iden \otimes U)V_2 ) = 2 \iden - V_1^* (\iden \otimes U) V_2 - V_2^* (\iden \otimes U^*) V_1,
\end{align}
so that $\norm{V_1 - (\iden \otimes U)V_2}_{\infty}^2 = 2 -\Gamma(U)$. 

Now, fix a contraction $W\in \M{m}$, $\norm{W}_{\infty}\leq 1$. We can use polar decomposition to write $W=UP$ for some unitary $U\in \U{m}$ and $P\in \M{m}$ satisfying $\mathbb{0} \leq P \leq \iden$, so that $W^*W=P^2$ and $U-W=U(\iden - P)$. Note that
\begin{equation}
    V_1 - (\iden \otimes U)V_2 = \underbrace{V_1 - (\iden \otimes W)V_2}_{:= X} - \underbrace{ \bigl(\iden \otimes (U-W) \bigr) V_2}_{:=Y},
\end{equation}
where $Y= \bigl(\iden \otimes (U-W) \bigr) V_2 = \bigl(\iden \otimes U(\iden - P) \bigr) V_2$ satisfies 
\begin{align}\label{eq:Y*Y<=Z*Z}
    Y^*Y = V_2^* \Bigl(\iden \otimes (\iden - P)^2 \Bigr) V_2 \leq V_2^* \Bigl(\iden \otimes \bigl(\iden - P^2 \bigr) \Bigr) V_2 =: Z^*Z,
\end{align}
where $(\iden - P)^2 \leq \iden - P^2$ holds because $\mathbb{0}\leq P\leq \iden$, and $Z:= \bigl(\iden \otimes \sqrt{\iden - P^2} \bigr) V_2$. Hence,
\begin{align}
    (X-Y)^* &(X-Y) \nonumber \\ &\leq 2(X^*X + Y^*Y)  \\
    &\leq 2(X^*X + Z^*Z) \\
    &= 2 \Bigl( \iden + V_2^* \bigl(\iden \otimes P^2 \bigr)V_2 - V_1^* (\iden \otimes W) V_2 - V_2^*(\iden \otimes W^*)V_1 + V_2^* \bigl(\iden \otimes (\iden-P^2) \bigr) V_2  \Bigr) \\
    &= 2 \bigl( 2\iden - V_1^* (\iden \otimes W) V_2 - V_2^*(\iden \otimes W^*)V_1 \bigr),
\end{align}
where the first inequality follows from Lemma~\ref{lemma:X-Y} and the second one follows from \eqref{eq:Y*Y<=Z*Z}. Thus, for every contraction $W\in \M{m}$, $\norm{W}_{\infty}\leq 1$, taking $U\in \U{m}$ to be the polar unitary shows
\begin{align}
    2-\Gamma(U) = \norm{V_1 - (\iden \otimes U)V_2}_{\infty}^2 = \norm{X-Y}_{\infty}^2 \leq 2(2- \Gamma(W)).
\end{align}
Rearranging the terms yields $2\Gamma(W)\leq 2 + \Gamma(U)$, which finishes the proof.
\end{proof}

With all the setup in place, we can now prove Theorem~\ref{theorem:main}.

\section{Proof of Theorem~\ref{theorem:main}}\label{sec:proof}

Fix Stinespring isometries $V_1, V_2 : \C{d}\to \C{n}\otimes \C{m}$ for the two channels, and let $\Gamma : \M{m}\to \mathbb{R}$ be defined as in Lemma~\ref{lemma:main}. Recall from \eqref{eq:F-W-2} that
\begin{equation}
    2F(\Phi_1 , \Phi_2) = \max_{ \substack{W\in \M{m} \\ \norm{W}_{\infty}\leq 1 }}  \Gamma(W) = \Gamma (W_\star),
\end{equation}
where $W_{\star}\in \M{m}$ is the contraction that attains the maximum. Then, Lemma~\ref{lemma:main} shows that there exists a unitary $U_\star\in \U{m}$ such that $2\Gamma(W_\star) \leq 2 + \Gamma(U_\star)= 4 - \norm{V_1 - (\iden \otimes U_\star)V_2}_{\infty}^2$. Hence, using Fuchs-van de Graaf inequality \eqref{eq:fuchs-channel}, we obtain
\begin{align}
4 \left( 1 - \frac{1}{2} \norm{\Phi_1 - \Phi_2}_{\diamond} \right) \leq 4 F(\Phi_1, \Phi_2) =2\Gamma(W_\star) \leq   4 - \norm{V_1 - (\iden \otimes U_\star)V_2}_{\infty}^2,
\end{align}
which, upon rearranging terms, yields the desired claim:
\begin{equation*}
\pushQED{\qed} 
\norm{V_1 - (\iden \otimes U_\star) V_2}_{\infty} \leq   \sqrt{2\norm{\Phi_1 - \Phi_2}_{\diamond}}.\qedhere
\popQED
\end{equation*}

For completeness, we recall the example from \cite{vomEnde2023} to show that $\sqrt{2}$ in Theorem~\ref{theorem:main} is optimal. Consider unitary channels $\Phi_1,\Phi_2:\M{n}\to\M{n}$, $\Phi_k (\cdot) = V_k (\cdot)V_k^*$, defined by $V_1=\iden_n$ and $V_2=\operatorname{diag}(e^{2\pi ij/n})_{j=1}^n$, $n\geq 2$. Here, the dilation space is $\mathbb{C}$, and a simple geometric argument shows
\begin{equation}
\min_{z\in \U{1}}\norm{V_1-zV_2}_\infty
=
2\sin\left(\frac{\pi(n-1)}{2n}\right).
\end{equation}
Moreover, since the convex hull of the spectrum of $V_1^*V_2=V_2$ contains the origin, the two channels are perfectly distinguishable \cite{Acin2001}, so that $\norm{\Phi_1-\Phi_2}_\diamond=2$. Therefore,
\begin{equation}
\frac{
\min_{z\in \U{1}} \norm{V_1-zV_2}_\infty
}{
\sqrt{\norm{\Phi_1-\Phi_2}_\diamond}
}
=
\sqrt{2}\sin\left(\frac{\pi(n-1)}{2n}\right)
\longrightarrow \sqrt{2} \quad \text{as } n\to \infty,
\end{equation}
so no dimension-independent constant smaller than $\sqrt{2}$ can hold in Theorem~\ref{theorem:main}.

\section*{Acknowledgements}
I acknowledge support from the Deutsche Forschungsgemeinschaft (DFG, German Research Foundation) via TRR 352 – Project-ID 470903074. 

\bibliographystyle{alpha}
\bibliography{references}
\vspace{0.25cm}
\hrule

\end{document}